\newcommand{\todo}[1]{\textcolor{blue}{\bf #1}}
\begin{document}

\title{Landscape properties of the very large-scale and the variable neighborhood search metaheuristics for the multidimensional assignment problem
}

\titlerunning{Landscapes of the very large-scale and the variable neighborhoods 
for the MAP}

\author{Alla Kammerdiner         \and
        Alexander Semenov  \and
        Eduardo L. Pasiliao
}


\institute{Alla Kammerdiner \at
              University of Florida Research and Engineering Educational Facility (REEF), Shalimar, FL \\
              \email{alla.ua@gmail.com}           
           \and
           Alexander Semenov \at
              Herbert Wertheim College of Engineering, University of Florida, FL, USA  \\
              \email{asemenov@ufl.edu} 
                         \and
           Eduardo L. Pasiliao \at
              Munitions Directorate, Eglin Air Force Base, Eglin, FL   \\
              \email{elpasiliao@gmail.com} 
}

\date{Received: date / Accepted: date}

\maketitle
 
\begin{abstract}We study the recent metaheuristic search algorithm for the multidimensional assignment problem (MAP) using fitness landscape theory. The analyzed algorithm performs a very large-scale neighborhood search on a set of feasible solutions to the problem. 
We derive properties of the landscape graphs that represent these very large-scale search algorithms acting on the solutions of the MAP. In particular, we show that the search graph is generalization of a hypercube. 
We extend and generalize the original very large-scale neighborhood search to develop the variable neighborhood search. The new search is capable or searching even larger large-scale neighborhoods. 
We perform numerical analyses of the search graph structures for various problem instances of the MAP and different neighborhood structures of the MAP algorithm based on a very large-scale search. We also investigate the correlation between fitness (i.e., objective values) and distance (i.e., path lengths) of the local minima (i.e., sinks of the landscape). Our results can be used to design improved search-based metaheuristics for the MAP.

\keywords{combinatorial optimization \and
metaheuristics \and the multidimensional assignment problem (MAP) \and very large-scale neighborhood search (VNS) \and variable neighborhood search (VNS) 
\and
fitness landscapes}
\end{abstract}

\section{Introduction}
\label{sec:intro}
In this paper, we examine properties of the search graphs produced by very large-scale neighborhood (VLSN) search for the multidimensional assignment problem (MAP)~\cite{pierskalla1968letter}. The MAP is a combinatorial optimization problem that arises in a number of applications, including multi-sensor data association in multiple target tracking, record linkage in multiple databases, and high energy physics~\cite{kammerdiner2009multidimensional}. The MAP is known to be NP-hard~\cite{karp1972reducibility}. Furthermore, this problem is characterized by an exponential growth in input size with the dimensionality of the problem. 

Because of computational difficulty of the MAP, development of solution algorithms for the MAP has received significant amount of attention in the literature. Recently introduced very large-scale neighborhood algorithm~\cite{kammerdiner2017very} has shown some promise for solving the MAP, as it outperforms other heuristics, such as Greedy heuristic, and metaheuristics, such as genetic algorithms~\cite{kammerdiner2021multidimensional}. Efficient exploration of the set of feasible solutions of the MAP represents a key challenge in further improving the performance of VLSN-type metaheuristic algorithms for this problem.
Yet, little is known about the types of search graphs produced by the VLSN search.

 We study the very large-scale neighborhood search (VLSN) for the MAP \cite{kammerdiner2021multidimensional,kammerdiner2017very}.
%
We apply \emph{fitness landscape theory}~\cite{reidys2002combinatorial}, a mathematical framework for analysis of search algorithms on the combinatorial optimization problems (COPs). 
We establish that the VLSN landscape graph is a generalization of a 
\emph{hypercube graph}. As discussed in a review paper~\cite{schiavinotto2007review}, search landscape analysis is often used for studying how structure of the space being search influences the performance of stochastic local search algorithms. A distance between two feasible 
solutions of a given COP instance is a fundamental concept for understanding the search space structure. It is usually defined as a number of elementary operations that are needed to transform one solution into another. 

More recently, graph-based approaches~\cite{daolio2011communities,kammerdiner2013application,KAMMERDINER201478} to search landscape analysis have been introduced in addition to distance-based analysis. In particular, work~\cite{daolio2011communities} introduces a method for studying combinatorial fitness landscapes and applied it to analyze the structure of the quadratic assignment problem, another hard COP. They model a search landscape as a directed weighted graph with vertices given by the local optima and edges representing the transitions between the respective basins of attraction of two local optima.

%
In the contrast, studies
\cite{kammerdiner2013application,KAMMERDINER201478} propose a different graph-based methodology for investigating combinatorial fitness landscapes, where a search landscape is also modeled as a weighted directed graph. However, the vertices of this landscape graph include all the feasible solutions, 
not just the local optima. And the directed edges represent all possible moves from the current solution to the next one 
by a search algorithm, rather than only 
the inter-basin transitions. In our paper, we follow the graph-based landscape methodology in~\cite{KAMMERDINER201478}.

Papers~\cite{kammerdiner2013application,KAMMERDINER201478} show that the landscape digraph, which incorporates all feasible solutions, forms directed trees. Depending on whether the combinatorial optimization problem deals with minimization or maximization, a collection of directed trees in the landscape digraph is either an in-forest or an out-forest.  \cite{KAMMERDINER201478} extend and generalize the prior work  in~\cite{kammerdiner2013application}  to include out-forests and column Laplacians.
Furthermore, \cite{KAMMERDINER201478} establish a theoretically important relation between local optima and the weak components of forests. Based on this relation, they prove that the eigenvalue multiplicities of forest Laplacians give the numbers of local optima for the COPs.

The literature on combinatorial landscapes has grown significantly in the last two decades. It comprises a variety of research directions and methodologies, ranging from theoretical work in evolutionary computation~\cite{altenberg1997fitness,hordijk2005correlation,reidys2002combinatorial,richter2008coupled}, to numerical studies of search algorithms and hyper-heuristics in optimization~\cite{jones1995fitness,merz2004advanced,ochoa2009analyzing,schiavinotto2007review,stadler1992landscape}, and to recent application of landscapes in machine learning~\cite{bosman2020visualising,choromanska2015loss,choromanska2015open}. Recent advances in combinatorial landscapes are surveyed in~\cite{malan2021survey}. One of the major themes in landscape analysis remains the correlation between the solutions' fitness and distance in part because of its relation with problem instance difficulty~\cite{jones1995fitness}. 

We contribute to this literature in several ways. In particular, we are the first to consider the landscapes of the multidimensional assignment problem (MAP) that are generated by the very large-scale neighborhood search and the variable neighborhood search.
For the very large-scale neighborhood search, we establish a theoretical result regarding the graph structure of the MAP landscapes. We also present numerical comparisons between the MAP landscapes produced by alternative versions of the very large-scale neighborhood search, and the new variable neighborhood search. Last, but not least, we contribute to the large literature on the metaheuristics by describing and investigating a new metaheuristic for the multidimensional assignment, namely the variable neighborhood search.

The remainder of our paper has the following structure. In Section~\ref{sec:prelim}, we formally define basic landscape-theoretic concepts and describe the graph-based methodology in~\cite{KAMMERDINER201478}. This section also presents background knowledge on the multidimensional assignment problem (MAP) and 
the very large-scale neighborhood (VLSN) search for the MAP. 
Next, in Section~\ref{sec:hypercube} we prove that the search graph of the 
VLSN algorithm is a generalized hypercube. We propose a variable neighborhood search (VNS) for the VLSN algorithm in Section~\ref{sec:VNS}. We also connect the proposed VNS algorithm to the VLSN search graph. Importantly, Section~\ref{sec:comput} reports the results of numerical experiments aimed at
examining the structure of the VLSN search graph. The fitness-distance correlation for the VLSN search and the VNS is discussed in Section~\ref{sec:correl}. Finally, Section~\ref{sec:end} concludes the paper.

\section{Preliminaries}\label{sec:prelim}
\subsection{Combinatorial Fitness Landscapes}
Fitness landscape analysis is a valuable methodology for investigating metaheuristics in combinatorial optimization. An application of a search algorithm to solving an instance of the COP induces a structure on the space of this COP's solutions. The purpose of fitness landscapes theory is to study how the search behavior and performance are impacted by the underlying structure of the solutions' space. 

In this section, we formally define a fitness landscape. In particular, we first define related concepts such as a neighborhood of a solution. Next, we present a 
definition of fitness landscape. 
 The COP's fitness landscape can be viewed 
    as a directed graph
Then we discuss the relation between features of a landscape graph and local optima of the COP.
Specifically, 
there 
is a correspondence between sinks (sources) of directed in-(out-)trees and the COP's local minima (maxima).
\begin{figure}
    \centering
    \includegraphics[width = 0.75\textwidth]{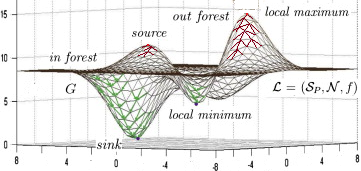}
    \caption{A landscape $\mathcal{L}$ of an instance $P$ of some combinatorial optimization problem with an objective function $f$ and a neighborhood mapping $\mathcal{N}$}
    \label{fig:landscape_JOGO}
\end{figure}

Following~\cite{KAMMERDINER201478}, we adopt the terminology of~\cite{schiavinotto2007review}. 
We consider a COP such as the MAP, with other examples of COPs whose landscapes have been studied being the travelling salesman problem (TSP) and the quadratic assignment problem (QAP). Let $P$ be an instance of the COP with a set of feasible solutions $S_P$. Then $\mathbf{2}^{S_P}$ denotes the set of all feasible solutions of instance $P$. There is a neighborhood mapping $\mathcal{N}:S_P \rightarrow \mathbf{2}^{S_P}$ that assigns to each solution $s$ its subset $\mathcal{N}(s)$ of solutions, called \emph{neighbors} $s^{\prime}\in\mathcal{N}(s)$. For convenience, we assume that  $s \notin \mathcal{N}(s)$.
 \vspace{10pt}

In combinatorial landscapes, a search algorithm is described and studied with respect to a distance between solutions and the solution quality or fitness. In the context of combinatorial landscapes, a neighborhood is defined using an operator $\Delta$ that determines all allowable moves for a single step of the search algorithm. 
 Given a single-step operator $\Delta$, a \emph{distance} $d(s,t)$ between two solutions $s,t$ is the minimum number of 
 steps $\delta \in \Delta$ needed to reach $s$ from $t$ or $t$ from $s$.
Given a problem instance $P$ and the objective values of its solutions, the 
\emph{fitness function} $f:S_P \rightarrow \mathbf{R}$ is a mapping from a feasible solution to its objective value.

Now that we have defined fitness and distance with respect to a search algorithm, we can introduce a key concept of combinatorial fitness landscapes.
A \emph{fitness landscape} of COP instance $P$ is a triple $(S_P, \mathcal{N}, f)$ that consists of
\begin{enumerate}
    \item[(i)] a set $S_P$ of all feasible solutions of $P$;
    \item[(ii)] a neighborhood $\mathcal{N}$ on $S_P$ that corresponds to operator $\Delta$ and distance $d$ between the solutions in $S_P$;
    \item[(iii)] a fitness function $f$ of $P$ that assigns its objective value $f(s)$ to every solution $s\in S_P$.
\end{enumerate}

Next, we describe a digraph representation of combinatorial landscapes, which was introduced in~\cite{KAMMERDINER201478} building on the diagraph methodology in~\cite{agaev2000matrix,agaev2005spectra}. 
Let $G=(V,E)$ denote a weighted digraph without loops with a vertex set $V  = V(G) = \{1,\ldots,n\}$ that consists of $n$ nodes, and n edge set $E = E(G) = \{ e_k = (i_k,j_k) : i_k,j_k\in V(G), k= 1,\ldots,m \}$ comprised of $m$ directed edges. Without loss of generality, we  assume that edges $e_k \in E$ 
    have strictly positive weights $w(e_k)>0$  for all $k=1,\ldots,m$.
We say that a weighted digraph $G=(V,E)$ represents a combinatorial landscape if 
\begin{enumerate}
        \item[(i)]  $V := S_P$, i.e., the vertex set consists of all feasible solutions;
        \item[(ii)] $E := \{ (i,j): i,j\in S_P, \quad j \in \mathcal{N}(i), \quad f(i) > f(j)  \}$, or the directed edges link neighboring solutions from a lower quality solution to a higher quality solution; 
        \item[(iii)] the weight of any directed edge $e=(i,j)\in E$ is given by $w(e) := f(i) - f(j) >0$. 
    \end{enumerate}
Moreover, we say that an objective function and a fitness landscape are \emph{disjunctive} if $f(i) \neq f(j)$ for any $i \neq j$. In~\cite{KAMMERDINER201478}, the correspondence between local minima $s^{\ast}$ of a problem instance $P$ and the sinks of landscape graph $G$ is established. Furthermore, there exists an underlying directed tree structure for the landscape. This is illustrated in Figure~\ref{fig:landscape_JOGO}.

\subsection{The Multidimensional Assignment Problem}\label{sec:MAP}
The multidimensional assignment problem (MAP) is an extension of the well-known linear assignment problem (LAP). In general, the MAP is NP-hard, unlike the LAP, which can be solved in polynomial time (e.g., by Hungarian algorithm). The linear assignment seeks to find a matching between two subsets (e.g., one-to-one assignment of jobs to machines) with the least costs.
An extension of the LAP to three of more dimensions, the MAP seeks to find a matching among multiple subsets (e.g., jobs, machines, time periods). 

The MAP has many alternative formulations, including 0-1 integer programming problem~\cite{pierskalla1967tri}, and a problem of partitioning a vertex set of a complete $D$-partite graph into $N$ pairwise disjoint cliques~\cite{burkard1999linear}. Because the VLSN search algorithm takes advantage of the formulation of the MAP using combinatorial optimization, that is the one we present here. The decision 
problem is to find a $D$-way one-to-one assignment of $N$ observations in $D$ subsets.

The multidimensional assignment problem can be written using its combinatorial optimization formulation as follows:
\begin{equation}\label{eq:MAP}
 \min_{\pi_2,\ldots,\pi_D \in \Pi_N} \sum_{i=1}^{N} c_{i \pi_2(i) \ldots \pi_D(i)},
\end{equation}
where $\Pi_N = \{ (\pi(i))_{i=1}^{N} \}$ is the set of all possible permutations of $\{1,2,\ldots,N\}$, a vector of permutations $(\pi_2,\ldots,\pi_D )$ is a feasible solution of the MAP, and $c_{i i_2 \ldots i_D}$ with $i,i_k = 1,\ldots,N$, $k=2,\ldots,D$ is a given array of cost coefficients.
 The LAP matches $D=2$ dimensions, and so the LAP can be written in an analogous fashion:
\begin{equation}\label{eq:LAP}
 \min_{\pi_2 \in \Pi_N} \sum_{i=1}^{N} c_{i \pi_2(i) },
\end{equation}
where permutation $\pi_2$ is a feasible solution of the LAP and $C = (c_{ij})$, $i,j=1,\ldots,N$.

The array of cost coefficients serves as an input. Observe that the input size is $N^D$. It grows exponentially in $D$ and at least cubically $N^3$  in $N$, since $D\geq3$. Rapid growth of input size with the MAP parameters of cardinality $N$ and dimensionality $D$ is illustrated in Figure~\ref{fig:MAPinput}. The rows in the figure denote different values of parameter $D$, while the columns represent possible values of $N$. The input size reaches terabyte scale for moderate values of $D$ and $N$. 
\begin{figure}
  \centering
  \includegraphics[angle=0,width=0.9\textwidth]{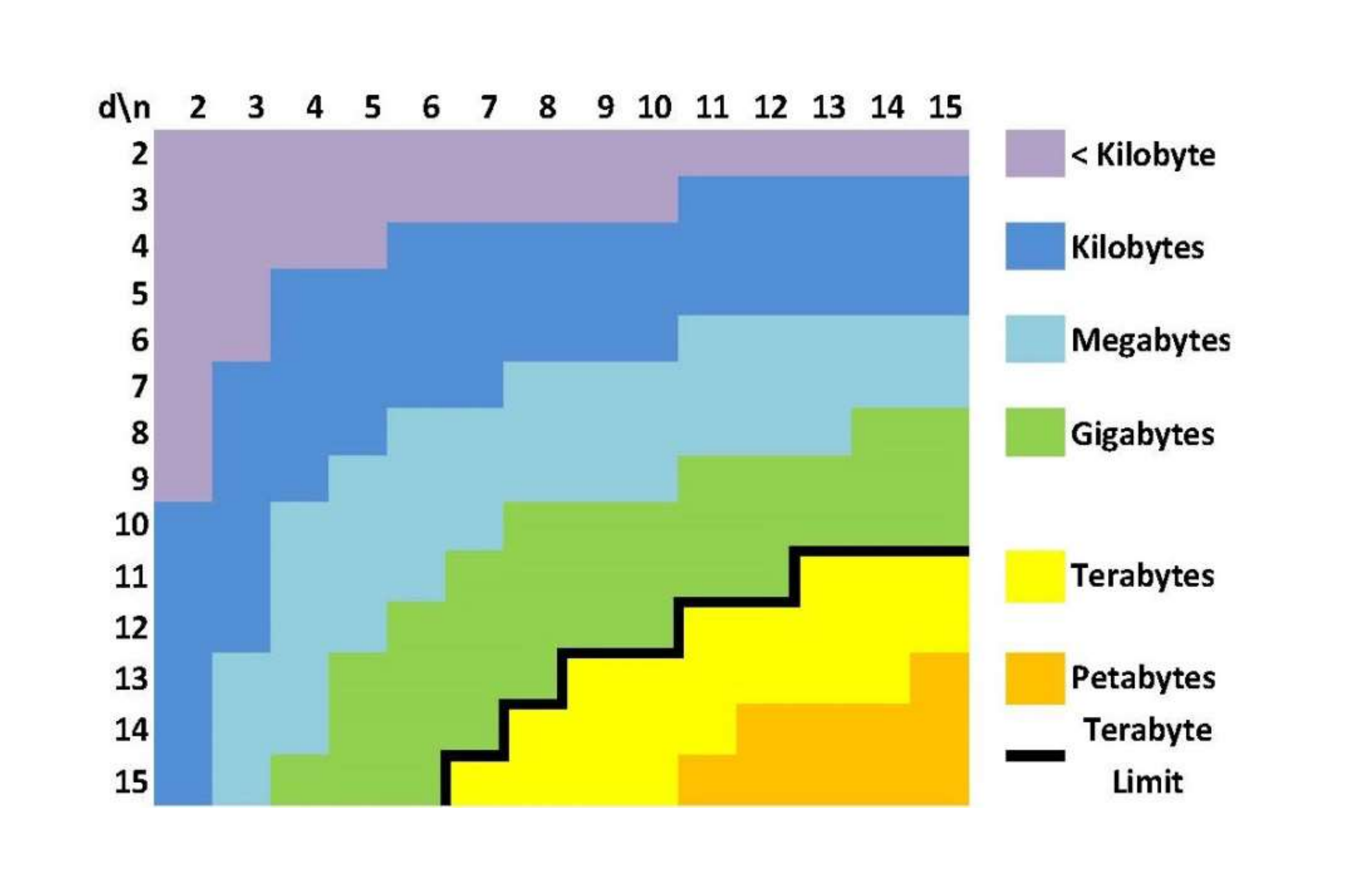}\\
  \caption{The input data for the MAP quickly reaches terabyte scale}
  \label{fig:MAPinput}
\end{figure}

\subsection{The Very Large-Scale Neighborhood Search for the MAP}
\label{sec:vlsn}

In this section, we present the VLSN search algorithm for the MAP. This algorithm was recently proposed in~\cite{kammerdiner2017very}. The VLSN search evaluates exponentially large solution neighborhoods by solving suitably chosen LAP instances. Algorithm~\ref{alg:VLSN.OP} shows the pseudocode for the VLSN search. First, we introduce some notations used in the pseudo code to explain how the procedure works. Then we define the projection matrix and describe how it is computed from the array of the MAP cost coefficients. Finally, we explain the procedure in Algorithm~\ref{alg:VLSN.OP}.

Consider an instance of the MAP~\eqref{eq:MAP} with cardinality $N$, dimensionality $D$, and $N^D$ array of cost coefficients $C = (c_{i_1 \ldots i_D})$, $i_k = 1,\ldots,N, k=1,\ldots, D$. Suppose the search algorithm employs  a multi-start strategy  $\Sigma=\{s_1,s_2,\ldots,s_{\mu}\}$ which consists of $\mu$ initial
      feasible solutions. These initial solutions can be constructed in some way, for example, by randomly generating $D - 1$ permutations $\pi_2,\ldots,\pi_{D}$ of size $N$.
Let $S = (\pi_2,\ldots,\pi_{D})$ denote a feasible solution of the MAP with objective value $y = y(S) := \sum_{i=1}^N c_{i \pi_2(i) \ldots \pi_{D}(i)}$.
%
%
%
Let $\textrm{LAP}\left(B\right) = (\pi,y(\pi))$ 
       be the optimal solution $\pi\in \Pi_N$ with objective value $y(\pi) = \sum_{i=1}^N b_{i \pi(i)}$ of an instance of the LAP with coefficient matrix $B = (b_{ij})_{i,j=1}^{N}$. 

Next, we define the projection matrix. Given the current feasible solution $S$ and a dimension $d$ of the MAP, the projection matrix $C(d,S)$ is constructed by suitably choosing $N\times N$ coefficients from the MAP cost array $C$. Note that we use two different (but analogous) ways to construct this matrix dependent on whether or not a given dimension is $d=1$. 
A modified way of constructing the projection is needed for the first dimension 
because all feasible solutions are written assuming an identical permutation in dimension $d=1$ (otherwise solutions would not be unique). In fact, a feasible solution of the MAP has a form of $(\pi_2,\ldots,\pi_D) = \left( \begin{array}{ccc}
    \pi_2(1) & \ldots & \pi_D(1) \\
    \vdots & \ddots & \vdots \\
    \pi_2(N) & \ldots & \pi_D(N)
\end{array} \right)$, a $N\times(D-1)$ matrix that consists of $D-1$ permutations $\pi_2,\ldots,\pi_D$ in columns $j=1,\ldots,D-1$.
\begin{definition}\label{def:proj1}
    Let $d \in \{1,\ldots,D\}$ be a given dimension, and $S=(\pi_2,\ldots,
    \pi_D)$ be the current solution of the MAP.  
        The projection of the solution $S$ onto the MAP cost array 
        $C$ 
        along %
        dimension $d$
        is an $N \times N$ matrix $C(d,S)$
        with elements $C_{ij}(d,S)$, $i,j=1,\ldots,N$, defined as:
        \begin{enumerate}
          \item[(i) ] 
          $C_{ij}(d,S)=c_{i \pi_2(i) \ldots j \ldots \pi_{D}(i)}$
          when $d\neq 1$,
          \item[(ii)] 
          $C_{ij}(1,S)=c_{i \pi_2(j) \ldots \pi_{D}(j)}$
          when $d=1$.
        \end{enumerate}
    \end{definition}

The projection matrix $C(d,S)$ is used to quickly search through $N!$ neighborhood solutions. These solutions differ from the current solution $S$ only by a permutation in one dimension (i.e., $d$). 
Solving the LAP~\eqref{eq:LAP} with cost coefficients $C(d,S)$ produces the optimal permutation $\pi^{\ast}_d$ that replaces the current $\pi_d$ in $S$ to create a new and improved current solution $S^{\prime}$. When the neighborhood in the first dimension is searched, the first dimension must remain fixed to the identity permutation.
 So, rather than searching for the optimal permutation in the first dimension, the projection matrix $C(1,S)$ is defined to find the permutation of the remaining dimensions together with respect to the first dimension.  
 
 Overall, Algorithm~\ref{alg:VLSN.OP} works as follows. It uses $\mu$ starting solutions. For each starting solution, the algorithm proceeds searching through very large-scale neighborhoods until a local optimum is reached (i.e., we have no further improving move). The search through the neighborhoods is repeated for each dimension $k$ by first computing the costs matrix $C(k,S)$ for the LAP as the ``projection'' defined above, and then solving the LAP. The LAP solution returns the permutation $\pi^{\ast}_k$ and objective value $y^{\ast}_k$. The basic, steepest descent version of Algorithm~\ref{alg:VLSN.OP} chooses the dimension $k$ and permutation $\pi^{\ast}_k$ with the smallest value $y^{\ast}_k$. If the update improves on $S$, the solution $S$ is updated by replacing the permutation in the $k$-th dimension with $\pi^{\ast}_k$. Otherwise the local minimum is reached, and the algorithm restarts in the next initial solution unless all initial solutions have been exhausted. 
 

\begin{algorithm}[h!]
\caption{Very Large-Scale Search via Optimal Permutation}
\label{alg:VLSN.OP}
\begin{algorithmic}[1]
\STATE Input $\mu$, $D$, $N$, 
$C = \{ c_{i_1 \ldots i_D} | i_k=1,\ldots,N, 
k=1,\ldots,D \}$
\STATE Input or Generate $\mu$ starting solutions 
$\Sigma= \{S_1,\ldots,S_{\mu} \}$ 
\FORALL{$S=(s_2,\ldots,s_{D})\in \Sigma$}
\STATE Compute objective value $y\leftarrow y(S)$ via Equation~(\ref{eq:MAP})
\REPEAT
    \FOR{ dimension $k=1$ \TO $D$} 
    \STATE Get projection $C(k,S)$ 
    of multidimensional matrix $C$ 
    \STATE Solve $\textrm{LAP}\left(C(k,S)\right) = (\pi_k^{\ast},y_k^{\ast})$ 
    \ENDFOR
    \STATE Find $k^{\prime}$ such that $y_{k^{\prime}} = \min\{y_k^{\ast}: k=1,\ldots,D\}$ 
    \IF{$y_{k^{\prime}} < y$} 
        \STATE Update 
        $S \leftarrow (s_2,\ldots,s_{k^{\prime}-1},\pi_{k^{\prime}}, s_{k^{\prime}+1}, \ldots, s_{D})$ 
        \STATE $y \leftarrow y_{k^{\prime}}$
    \ENDIF
\UNTIL{no improving move is left}
\ENDFOR
\STATE Output a feasible solution $S$ of the MAP~\eqref{eq:MAP} and cost $y$
\end{algorithmic}
\end{algorithm}


\section{VLSN Graph as a Generalized Hypercube}
\label{sec:hypercube}
In this section, we study the landscape of the VLSN search for the MAP with $N=2$. In particular, we show that an undirected graph of this landscape can be represented by a regular hypercube. 

First, we consider an example of the VLSN landscape graph for the MAP with $N=2$ and $D=3$.
An instance of the MAP with these parameters have the total number of solutions  $N_{sol} = (N!)^{D-1} = 2^2 = 4$.
Let these four feasible solutions be denoted by $\textbf{I} = \left(
            \begin{array}{cc}
                 1 & 1 \\
                 2 & 2
            \end{array}
        \right)$, 
        $\mathbf{\Pi} = \left(
            \begin{array}{cc}
                 2 & 2 \\
                 1 & 1
            \end{array}
        \right)$, 
         $\textbf{A} = \left(
            \begin{array}{cc}
                2 & 1 \\
                1 & 2
            \end{array}
        \right)$, 
        $\textbf{B} = \left(
            \begin{array}{cc}
                 1 & 2 \\
                 2 & 1
            \end{array}
        \right)$. 
        These solutions form a node set of the landscape graph.
The objective values of these solutions are $y_I = c_{111} + c_{222}$, $y_{\Pi} = c_{122} + c_{211}$, $y_A = c_{121} + c_{212}$, and $y_B  = c_{112} + c_{221}$. The direction of the edges between solution nodes  \textbf{I}, $\mathbf{\Pi}$, \textbf{A}, \textbf{B} depends on whether the respective differences $y_i - y_j$ are positive or negative. While the presence of an edge is determined based on how the neighborhood is defined. We consider two alternative definitions of a neighborhood, i.e., (i) the neighborhood that includes the search through all dimensions, and (ii) the neighborhood that excludes the search in the first dimension.   
Suppose the VLSN algorithm searches in only $D-1$ dimensions $2,\ldots,D$ and does not search the first dimension. Then node \textbf{I} ($\mathbf{\Pi}$) is connected with edges to \textbf{A} and \textbf{B}. These connections are shown in the left subplot of Fig.~\ref{fig:vlasn_n2d3} for $d = 2,3$. This subplot is a 2-dimensional hypercube. %
When the first dimension $d=1$ is also included, making the VLSN search in all dimensions $1,2,\ldots, D$, then the diagonals $(\mathbf{I}, \mathbf{\Pi})$ and $(\mathbf{A},\mathbf{B})$ are added. These additional edges are shown in the right subplot of Fig.~\ref{fig:vlasn_n2d3} for $d = 1,2,3$. This right subplot is a 2-dimensional hypercube with two diagonals. 

\begin{figure}
    \centering
    \includegraphics[width = 0.25\textwidth]{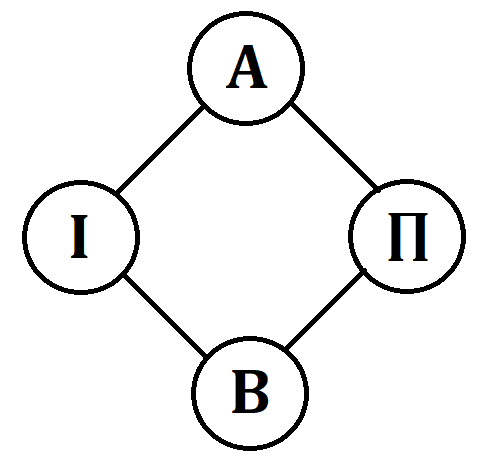}
    $\qquad \qquad \qquad$ 
    \includegraphics[width = 0.25\textwidth]{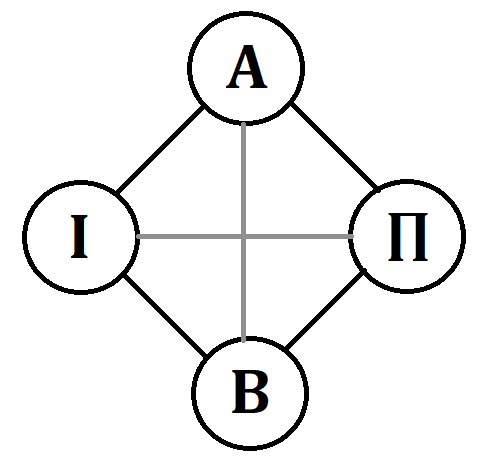}
    \caption{Undirected graphs of the VLSN search in $D-1 = 2$ dimensions $2,3$ (left) and $D=3$ dimensions $1,2,3$ (right) for the MAPs with $N=2, D=3$}
    \label{fig:vlasn_n2d3}
\end{figure}

Next, consider another example of the 
VLSN landscape graph
for MAP with $N=2$ and $D=4$. 
Then $N_{sol} = (N!)^{D-1} = 8$ is the number of feasible solutions for an instant with these parameters.
We follow as similar notation as in the first example to represent these 8 solutions.
Specifically, we denote
        $\textbf{ I} = \left(
            \begin{array}{ccc}
                 1 & 1& 1 \\
                 2 & 2& 2
            \end{array}
        \right)$ \qquad \,\, $\mathbf{\Pi} = \left(
            \begin{array}{ccc}
                2 & 2& 2 \\
                1 & 1& 1
            \end{array}
        \right)$
        \qquad $\textbf{ A} = \left(
            \begin{array}{ccc}
                2 & 1& 1 \\
                1 & 2& 2
            \end{array}
        \right)$ \qquad $\textbf{ B} = \left(
            \begin{array}{ccc}
                1 & 2& 1 \\
                2 & 1& 2
            \end{array}
        \right)$
        \qquad  $\textbf{ C} = \left(
            \begin{array}{ccc}
                1 & 1& 2 \\
                2 & 2& 1
            \end{array}
        \right)$ \qquad $\textbf{AB} = \left(
            \begin{array}{ccc}
                2 & 2& 1 \\
                1 & 1& 2
            \end{array}
        \right)$
        \qquad $\textbf{AC} = \left(
            \begin{array}{ccc}
                2 & 1& 2 \\
                1 & 2& 1
            \end{array}
        \right)$ \qquad $\textbf{BC} = \left(
            \begin{array}{ccc}
                1 & 2& 2 \\
                2 & 1& 1
            \end{array}
        \right)$.
These 8 solutions form a vertex set of the landscape graph. The objective values of these solutions are  $y_{I} = c_{1111} + c_{2222}$, $y_{\Pi} = c_{1222} + c_{2111}$, $y_A = c_{1211} + c_{2122}$, $y_B  = c_{1121} + c_{2212}$, $y_{C} = c_{1112} + c_{2221}$, $y_{AB} = c_{1221} + c_{2112}$, $y_{AC} = c_{1212} + c_{2121}$, and $y_{BC}  = c_{1122} + c_{2211}$. 
The objective values determine the orientation of edges.    
While the VLSN search and the related neighborhoods determine the existence of edges.
Again, We consider two definitions of a neighborhood, i.e., (i) the one that excludes the VLSN search in the first dimension, and (ii) the other one based on the search through all dimensions. %
First, assume the VLSN algorithm searches in only $D-1$ dimensions $2,\ldots,D$ and does not search the first dimension. Then the edge set is $V_{2,3,4} = \{ (\mathbf{I}, j) : \quad j \in\mathbf{A},  \mathbf{B},\mathbf{C}  \} \cup \{ (\mathbf{\Pi}, j) : \quad j \in\mathbf{AB},  \mathbf{BC},\mathbf{AC}  \} \cup \{ (\mathbf{A},j) : \quad j \in\mathbf{AB},\mathbf{AC}) \} \cup \{ (\mathbf{B},j) : \quad j \in\mathbf{AB},\mathbf{BC}) \} \cup \{ (\mathbf{C},j) : \quad j \in\mathbf{AC},\mathbf{BC}) \}$. These connections are shown in the left subplot of Fig.~\ref{fig:vlasn_n2d4} for $d = 2,3,4$. This subplot is a 3-dimensional hypercube, where $D-1 =3$. %
When the first dimension $d=1$ is also included, making the VLSN search in all dimensions $1,2,\ldots, D$, then the 3-dimensional diagonals $(\mathbf{I}, \mathbf{\Pi})$, $(\mathbf{A},\mathbf{BC})$, $(\mathbf{B}, \mathbf{AC})$ and $(\mathbf{C},\mathbf{AB})$ are added. These additional edges are shown in the right subplot of Fig.~\ref{fig:vlasn_n2d4} for $d = 1,2,3, 4$. This right subplot is a 3-dimensional hypercube with four diagonals. 

\begin{figure}
        \centering
        \includegraphics[width = 0.3\textwidth]{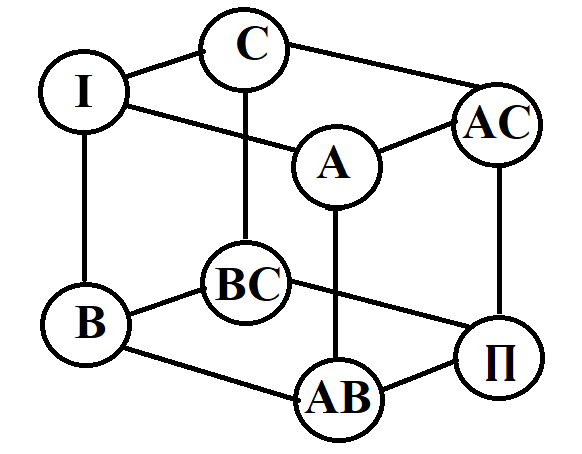}
        $\quad$
        \includegraphics[width = 0.3\textwidth]{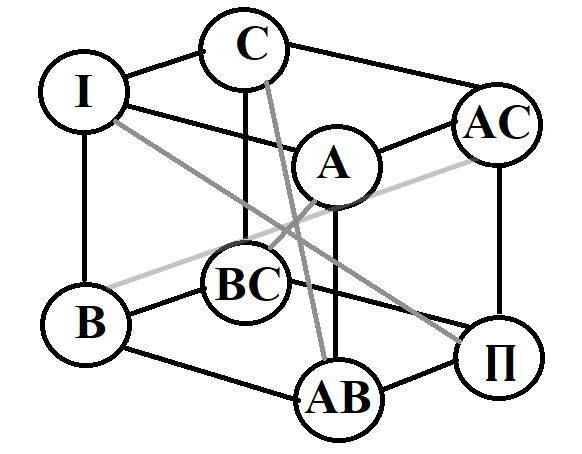}
        \caption{Undirected graphs of the VLSN search in $D-1 =3$ dimensions $2,3,4$ (left) and in $D=4$ dimensions $1,2,3,4$ (right) for the MAPs with $N=2, D=4$}
        \label{fig:vlasn_n2d4}
    \end{figure}

\begin{proposition}
For the MAP with $N=2$, the VLSN landscapes are hypercubes.
Specifically, the undirected graphs are $(D-1)$-regular hypercubes for VLSN in $D-1$ dimensions. 
The $N^{D-2}$ ``main'' diagonals are added into a hypercube for the VLSN in all $D$ dimensions. 
\end{proposition}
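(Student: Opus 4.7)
The plan is to exhibit an explicit bijection between feasible MAP solutions for $N=2$ and the vertex set of a $(D-1)$-cube, and then check that the VLSN neighborhood edges are exactly the cube edges (plus, when $d=1$ is included, the antipodal diagonals). Concretely, for $N=2$ each dimension $k\in\{2,\ldots,D\}$ admits only two permutations of $\{1,2\}$, the identity $e$ and the transposition $\tau$, so a feasible solution $S=(\pi_2,\ldots,\pi_D)$ is completely encoded by a binary vector $b(S)=(b_2,\ldots,b_D)\in\{0,1\}^{D-1}$ with $b_k=0$ if $\pi_k=e$ and $b_k=1$ if $\pi_k=\tau$. Since $|S_P|=(N!)^{D-1}=2^{D-1}$, this map $S\mapsto b(S)$ is a bijection onto the vertex set of the $(D{-}1)$-hypercube $Q_{D-1}$.

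Next, I would analyze the neighbors induced by a VLSN move in dimension $d\in\{2,\ldots,D\}$. By Definition~\ref{def:proj1}(i), the projection $C(d,S)$ depends only on the entries of $S$ outside column $d$, and solving $\mathrm{LAP}(C(d,S))$ returns one of the only two permutations available for $N=2$: either $\pi_d$ itself or its flip. Accordingly, the set of solutions reachable from $S$ via a single VLSN move in dimension $d$ is $\{S, S^{(d)}\}$ where $S^{(d)}$ differs from $S$ in the $d$-th permutation only. In binary-vector language, $b(S^{(d)})$ differs from $b(S)$ in the single coordinate $d$, which is precisely a hypercube edge of $Q_{D-1}$. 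Ranging $d$ over $\{2,\ldots,D\}$ yields all $D-1$ edges incident to $b(S)$, establishing the first claim: the undirected landscape graph of the VLSN restricted to dimensions $2,\ldots,D$ is isomorphic to the $(D{-}1)$-regular cube $Q_{D-1}$.

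For the second claim I have to unpack the role of dimension $d=1$, which I expect to be the main obstacle because Definition~\ref{def:proj1}(ii) treats $d=1$ differently and the resulting move must be translated back into the canonical ``identity in column 1'' representation of a feasible solution. The idea is: solving $\mathrm{LAP}(C(1,S))$ with $N=2$ returns either $e$ or $\tau$; the identity leaves $S$ untouched, whereas $\tau$ permutes the $N=2$ rows of $S$, and after re-canonicalizing (reindexing so column $1$ becomes $(1,2)^\top$ again) this operation replaces each $\pi_k$ by $\pi_k\circ\tau$, i.e.\ simultaneously flips every coordinate of $b(S)$. Thus the unique nontrivial $d=1$ neighbor of $S$ is the antipode $\overline{b(S)}$ of $b(S)$ in $Q_{D-1}$, so the edges contributed by dimension $1$ are exactly the antipodal ``main diagonals'' of the cube.

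Finally I would count: the $2^{D-1}$ vertices of $Q_{D-1}$ split into $2^{D-1}/2=2^{D-2}=N^{D-2}$ antipodal pairs, each giving one diagonal, which matches the stated number and completes the proof. The two figures for $(N,D)=(2,3)$ and $(2,4)$ serve as sanity checks: the left panels realize $Q_2$ and $Q_3$, and the right panels add $N^{D-2}=2$ and $4$ diagonals respectively. The only subtle step is the re-canonicalization argument for $d=1$, where one must verify that composing with $\tau$ in every column is consistent with both Definition~\ref{def:proj1}(ii) and the convention that column $1$ is fixed to the identity; all other steps reduce to the observation that $|\Pi_2|=2$ collapses every LAP choice to a single binary flip.
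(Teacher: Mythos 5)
Your proof is correct, but it takes a genuinely different route from the paper's. The paper argues by induction on $D$: it verifies the cases $D=3,4$ directly, then shows that the landscape graph for dimensionality $D_1$ decomposes into two copies of the graph for $D_1-1$ (one copy for each choice of the last permutation, identity or transposition) joined by a perfect matching of edges in the new dimension, and concludes that two $(D_0-1)$-regular hypercubes so joined form a $(D_1-1)$-regular hypercube; the diagonal claim is then dismissed as ``analogous.'' You instead construct an explicit isomorphism $S\mapsto b(S)\in\{0,1\}^{D-1}$ and check directly that a move in dimension $d\in\{2,\ldots,D\}$ flips exactly the coordinate $b_d$, while a move in dimension $1$, after re-canonicalization, composes every column with the transposition and hence realizes the antipodal map. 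Your approach buys two things the paper's does not: it dispenses with base cases and the inductive bookkeeping, and --- more substantively --- it actually proves the $N^{D-2}$ diagonal count by pairing the $2^{D-1}$ vertices into $2^{D-2}$ antipodal pairs, whereas the paper leaves that half of the statement unproved. The paper's induction, for its part, makes the recursive ``two facets plus a matching'' structure of the hypercube visible, which is the natural picture if one later wants to generalize to $N>2$ by gluing copies of smaller landscapes. One small point common to both arguments: the landscape digraph only carries an edge between neighbors with distinct objective values, so strictly speaking both you and the paper need the disjunctive (generic-cost) assumption to guarantee that every candidate edge is actually present; you leave this implicit just as the paper does.
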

\begin{proof}
By induction on the dimensionality $D$. Consider the VLSN search in $D - 1$ dimensions. In fact, the claims are verified for $D=3,4$. Suppose the claims hold for some $D_0 = D_1-1$. Then the undirected graph $G_0$ for the MAP with $N=2$ and $D = D_0 = D_1-1$ contains $(N!)^{D_1-2}$ vertices that represents feasible solution. For any solution vertex $s_0 = (\pi_1,\ldots,\pi_{D_1-1})$, we can construct two solution vertices $s_1^{(1,2)} = (\pi_1,\ldots,\pi_{D_1-1},\left(\begin{array}{c}
    1  \\
    2 
\end{array}\right))$ and
 $s_1^{(2,1)} = (\pi_1,\ldots,\pi_{D_1-1},\left(\begin{array}{c}
    2  \\
    1 
\end{array}\right))$ that belong to the undirected graph $G_1$ for the MAP with $N=2$ and $D = D_1$.
By construction, the edges of $G_0$ form the edges between vertices of the form $s_1^{(1,2)}$ in $G_1$. Similarly,  the edges of $G_0$ form the edges between vertices of the form $s_1^{(2,1)}$ in $G_1$.
There is an edge between the corresponding vertices  $s_1^{(1,2)}$ and  $s_1^{(2,1)}$ in $G_1$ because they only differ in dimension $d=D_1$, and one of them has a better (smaller) objective value than the other.
Since $G_0$ is a $(D_0-1)$-regular hypercube (by assumption), we have two $(D_0-1)$-regular hypercubes on the vertices  $s_1^{(1,2)}$ and  $s_1^{(2,1)}$, connected by each other by an edge $(s_1^{(1,2)},  s_1^{(2,1)})$. Hence, these vertices and edges together form a  $(D_1-1)$-regular hypercube. Analogously, one can show that the VLSN search in $D$ dimensions adds $N_{D-2}$ main diagonals to this hypercube.
\end{proof}


\section{Variable Neighborhood Search for the VLSN algorithm}
\label{sec:VNS}
Here we present a new VLSN-based metaheuristic that implements a variable neighborhood search through very large-scale neighborhoods of increasing size.
Specifically, we propose the variable neighborhood search (VNS) for the MAP that extends VLSN search. Basically, the original VLSN search, which is described in Section~\ref{sec:vlsn}, worked by considering a \emph{single} dimension $d$ and  finding an optimal permutation $\pi^{\ast}$ to replace in the $d$-th column $s_d$ of the MAP solution $S=(s_2,\ldots,s_d,\ldots,s_D)$, while keeping the remaining $D-2$ permutation columns $\{s_2,\ldots,s_D\}\setminus \{s_d\}$ unchanged/fixed. In the new VNS, expanding neighborhoods can be used by considering \emph{multiple} dimensions $d_1,\ldots, d_K$ instead of the single dimension. For the single dimension in the VLSN search, $D$ (or $D-1$) alternative choices exist (depending on the VLSN search implementation, with or without permutations in the very first dimension, i.e., with or without diagonals in the landscape). Whereas for the two dimensions in the VNS, there exists $\left( \begin{array}{c}
    D \\
    2 
\end{array}\right) = D(D-1)/2$ (or $(D-1)(D-2)/2$) ways to choose two dimensions. In general, if the VNS considers $k$ dimensions (where $k<D$), then the VNS neighborhood consists of $\left( \begin{array}{c}
    D \\
    k 
\end{array}\right) = D(D-1)\cdot\ldots\cdot(D-k)/k!$ (or $(D-1)\cdot\ldots\cdot(D-k-1)/k!$) ways to move to a neighboring solution. The trade-off is that the VLSN search is linear in the number of considered dimensions, and hence at most $D$ of the linear assignments or matching problems solved. While the VNS is polynomial with $O(D^k)$ for any specific $k$-dimensional choice of an expanded neighborhood, and the VNS is exponential when all possible $k$-subsets of considered dimensions are included into the VNS neighborhood.

The idea of searching in a subset of dimensions rather than a single one is not entirely new for the MAP. Neither is the general VNS, which is an established class of metaheuristics. A VNS algorithm typically start with some standard neighborhood, and keeps searching until local minima are reached, when no more improvement is possible, the use of the VNS allows expanding the neighborhood to try to escape the local minimum. 
What is new is how the VLSN search is extended to form the VNS. 

To extend the VLSN search from a single dimension in the original algorithm (Alg.~\ref{alg:VLSN.OP}) to multiple dimensions in the proposed VNS, we first 
generalize
the definition of a projection. While Definition~\ref{def:proj1} introduces a concept of the projection along a single dimension, Definition~\ref{def:proj2_k_dim} below gives the projection along many dimensions, specifically some proper subset of $k$ out of $D$ dimensions (or $D-1$ for the VLSN search implementation without permuting the first dimension). Def.~\ref{def:proj2_k_dim} is a generalization of Def.~\ref{def:proj1}, because when $k=1$ (or $k=D-1$) both definitions are essentially identical.

\begin{definition}\label{def:proj2_k_dim}
    Consider $k$ such that $k<D$. Let $d_1, \ldots, d_k \in \{1,\ldots,D\}$, such that $d_i \neq d_j$ for $i \neq j$, be some given dimensions. Let an $N\times (D-1)$ matrix $S=(\pi_2,\ldots,
    \pi_D)$ with permutations as its $D-1$ columns be the current solution of the MAP instance $P$. Let an $N^D$ array $C$ denote the array of cost coefficients $c_{i i_2 \ldots i_D} := c\;[i, i_2, \ldots, i_D] $ in the objective function of the MAP~\ref{eq:MAP}.
        The projection of the solution $S$ onto the array $C$ 
        along %
        dimensions $d_1,\ldots,d_k$
        is an $N \times N$ matrix $C(d_1,\ldots,d_k,S)$
        with elements $C_{ij}(d_1,\ldots,d_k,S)$, $i,j=1,\ldots,N$, defined as:
        \begin{enumerate}
          \item[  (i)] 
          $C_{ij}(d_1,\ldots,d_k,S)=
          c\;[i, \pi_2(i), \ldots, \mathbf{\pi_{d_1}(j), \ldots, \pi_{d_k}(j)}, \ldots, \pi_{D}(i)]
          $
          when 
          $d_i\neq 1$ for all $i=1,\ldots,k$;
          \item[ (ii)] 
          $C_{ij}(1,d_1,\ldots,d_k,S)=
          c\;[i, \pi_2(j), \ldots, \mathbf{\pi_{d_1}(i), \ldots, \pi_{d_k}(i)}, \ldots, \pi_{D}(j)]
          $
          when there exists $d\in\{d_1,\ldots,d_k\}$ such that $d=1$.
        \end{enumerate}
    \end{definition}


Now the notion of projection is generalized to a $k$-subset of dimensions. Next, analogously to the VLSN search in one-dimension, we extend the VLSN to the VNS, which is essentially the VLSN search along $k$ 
dimensions:
        \begin{enumerate}
            \item Choose a subset of $k$ out of $D-1$ (or $D$) dimensions $d_1,\ldots,d_k$;
            \item Holding the assignment of $(\pi_{d_1},\ldots, \pi_{d_k})$ fixed in the current solution $S$:
            \begin{enumerate}
                \item Get the projection costs matrix $C^\prime = C(d_1,\ldots,d_k,S)$;
                \item Solve the LAP for $C^\prime$.
            \end{enumerate}
        \end{enumerate}

Algorithm~\ref{alg:VLSN.VNS} below presents the variable neighborhood search (VNS) for the VLSN search. Choose an (integer) order $K$ of the VNS such that $K<D$. This algorithms works by splitting $D$ dimensions into two subsets of dimension indices $(K,D-K)$ such that $K \leq D-K$. As in Alg.~\ref{alg:VLSN.OP}, $\mu$ denote the number of multi-starts (i.e., the number of initial starting solutions) in Alg.~\ref{alg:VLSN.VNS}. The procedure in Alg.~\ref{alg:VLSN.VNS} considers two types of dimension subsets: (i) all possible combinations of $K$ dimensions out of the $D-1$ dimensions that exclude the first dimension; and (ii) all possible combinations of $K-1$ dimensions out of the same $D-1$ dimensions excluding the first (which is the same as considering combinations of $D-K-1$ dimensions out of $D-1$). Similar to Alg.~\ref{alg:VLSN.OP}, this is done because the first dimension must remain fixed to an identity permutation to avoid double-counting the feasible solutions of the MAP (that is why the solution matrices contain only permutation columns of the last $D-1$ dimensions).


\begin{algorithm}[h!]
\caption{Order-$K$ Variable Neighborhood Search 
for the 
VLSN Search}
\label{alg:VLSN.VNS}
\begin{algorithmic}[1]
\STATE Input $K$, $\mu$, $D$, $N$, 
$C = \{ c_{i_1 \ldots i_D} | i_d=1,\ldots,N, 
d=1,\ldots,D \}$
\STATE Input or Generate $\mu$ starting solutions 
$\Sigma= \{S_1,\ldots,S_{\mu} \}$ 
\FORALL{$S=(s_2,\ldots,s_{D})\in \Sigma$}
\STATE Compute objective value $y\leftarrow y(S)$ via Equation~(\ref{eq:MAP})
\REPEAT
    \STATE \COMMENT{all $K$ dimensions are in the $(D-1)$-element index set $\{2,\ldots,D\}$}
    \FOR{ choice $h=1$ \TO $H = \left(\begin{array}{c}
         D-1  \\
         K 
    \end{array}\right)$}
    \STATE Choose $K$ dimensions $k_1,\ldots,k_K$ out of $D-1$
    \STATE Observe the remaining $D-K$ dimensions $1,d_1, \ldots,d_{D-K-1}$
    \STATE Get projection $$C(h,S) 
    := C(k_1,\ldots,k_K,S)
    =: C(k_1,\ldots,k_K;1,d_1, \ldots,d_{D-K-1};S)$$ 
    of 
    cost array $C$ using Def.~\ref{def:proj2_k_dim}
    \STATE Solve $\textrm{LAP}\left(C(h,S)\right) = (\pi_h^{\ast},y_h^{\ast})$ 
    \ENDFOR
    
    \STATE \COMMENT{one of $K$ dimensions is $d=1$, and the rest  $K-1$ are in the $(D-1)$-element index set $\{2,\ldots,D\}$}
    \FOR{ choice $h0=1$ \TO $H0 = \left(\begin{array}{c}
         D-1  \\
         K-1 
    \end{array}\right)$}
    \STATE Choose $K -1$ dimensions $k_1,\ldots,k_{K-1}$ out of $D-1$
    \STATE Observe the remaining $D-K$ dimensions $d_1, \ldots,d_{D-K}$
    \STATE Get projection 
    $$C(h0,S) := C(1,k_1,\ldots,k_{K-1},S)
    =: C(1,k_1,\ldots,k_{K-1};d_1, \ldots,d_{D-K};S)$$ 
    of 
    cost array $C$ using Def.~\ref{def:proj2_k_dim} 
    \STATE Solve $\textrm{LAP}\left(C(h0,S)\right) = (\pi_{h0}^{\ast},y_{h0}^{\ast})$ 
    \ENDFOR
    
    \STATE Find $h^{\prime}$ such that $y_{h^{\prime}} = \min\left\{\min\{y_{h}^{\ast}: h=1,\ldots,H\},\min\{y_{h0}^{\ast}: h0=1,\ldots,H0\}\right\}$ 
    \IF{$y_{h^{\prime}} < y$}
        \STATE $y \leftarrow y_{h^{\prime}}$
    \IF{$h^{\prime} = \min\{y_{h}^{\ast}: h=1,\ldots,H \}$}
        \STATE Update $S \leftarrow \left(s_2,\ldots,\pi^{\ast}_h(s_{k_1}),\ldots,\pi^{\ast}_h(s_{k_K}), \ldots, s_{D}\right)$ 
    \ENDIF
    \IF{$h^{\prime} = \min\{y_{h0}^{\ast}: h0=1,\ldots,H0\}$ }
        \STATE Update $S \leftarrow \left(s_2,\ldots,\pi^{\ast}_{h0}(s_{d_1}),\ldots,\pi^{\ast}_{h0}(s_{d_{D-K-1}}), \ldots, s_{D}\right)$ 
    \ENDIF
    \ENDIF
\UNTIL{no improving move is left}
\ENDFOR
\STATE Output a feasible solution $S$ of the MAP~\eqref{eq:MAP} and cost $y$
\end{algorithmic}
\end{algorithm}

\section{Computational Results for the VLSN Search Graph}
\label{sec:comput}
In this section, we present some computational results for the landscapes of the VLSN search and VNS metaheuristics for the MAP. Both of these algorithms can alternate between exploration (i.e., multi-start) and exploitation (i.e., local search). We separately examine how the MAP landscape changes with an increase in exploitation, and next, how the landscape adjusts with two alternative ways of increasing exploration. In fact, the first subsection compares the landscapes of the original VLSN search with the new VNS variant introduced in this paper. VNS allows for greater exploration, and leads to landscape graphs with greater number of local minima. Next, the second subsection compares the landscapes of two multi-start strategies for the VLSN search, i.e., a random and a deterministic grid-based multi-starts.

We have implemented the algorithms using C++ language, the code was executed on AWS r5a.16xlarge instance with 64 cores and 512 GB of RAM. Each instance of the algorithm was run on a single core.

\subsection{Numerically 
comparing the quality of solutions found 
by the original 
VLSN search and its VNS 
}\label{sec:compareVSN2VLSN}

We compare the solution quality of the original 
VLSN search (see Alg.~\ref{alg:VLSN.OP}) and the new variable neighborhood search (VNS) for the VLSN (see Alg.~\ref{alg:VLSN.VNS}).
We expect VNS (Alg.~\ref{alg:VLSN.VNS}) to search through a greater number of solutions and to find higher quality solutions than the original VLSN search (Alg.~\ref{alg:VLSN.OP}). 
Note that the solution quality tends to improve 
as the total number of solutions examined increases. 
%
In fact, the larger amount of solution space is explored, the more likely is that an algorithm finds a better solution. 
Search-based metaheuristics like the VLSN search alternate between the exploitation phase (i.e., search of the neighborhoods) and the exploration phase (i.e., restarting in a new initial solution of the multi-start). 
Because VNS allows to escape a local minimum by expanding search to larger and larger neighborhoods,  Alg.~\ref{alg:VLSN.VNS} stays longer in the exploitation (or search) phase. Whereas Alg.~\ref{alg:VLSN.OP} switches from exploitation (or search) to exploration (or restart) 
sooner %
than Alg.~\ref{alg:VLSN.VNS}. This difference between algorithms and the fact that the number of solutions found in exploitation phase differs with each MAP instance imply that there will be differences in the total number of solutions two algorithms have found  by the time 
they terminate in local minima in the last exploitation phase (i.e., all chosen multi-starts have been explored). 
Therefore, we set the number of restarts $\mu$ to one in both Alg.~\ref{alg:VLSN.OP} and~\ref{alg:VLSN.VNS} ($\mu_{\mathrm{VNS}} = \mu_{\mathrm{VLSN}} = \mu = 1$), and use the same MAP instances and the same starting solutions.

To better understand 
the benefits and trade-offs due to %
greater exploitation of VNS (Alg.~\ref{alg:VLSN.VNS}) and 
compared to %
VLSN (Alg.~\ref{alg:VLSN.OP}), we perform a statistical comparison of the quality of solutions found by these two algorithms. Specifically, we compare the single-start 
VLSN search to the single-start VNS for the VLSN on sets of $I = 100$ randomly generated problem instances for several combinations of cardinality $N$
and dimensionality $D$ 
of the MAP. Specifically, we consider $D=4, N=5,6,7,8,9,10$ and $D=5,N=5$ to lower computational burden of our simulations that involve computing VNS with all order-$K$ neighborhoods.

Let $A_1$ denote the single-start 
VLSN search  (Alg.~\ref{alg:VLSN.OP}) with $\mu=1$, and let $A_2$ be the single-start VNS for the VLSN  (Alg.~\ref{alg:VLSN.VNS}) with $\mu = 1$ and the neighborhood that includes any combinations of $K$ dimensions, where $K=1,2,\ldots,D-K$. That is, we consider the VNS with the neighborhood that is a union of all order-$K$ neighborhoods. 
We denote $y_1, y_2$  the objective values  that are found by algorithms $A_1,A_2$, respectively. 
Similarly, ${\nu}_1,{\nu}_2$ and $\mu_1,\mu_2$ denote the respective numbers of solutions (i.e., nodes in the landscape graph) explored and the corresponding numbers of local minima (i.e., sinks in the landscape graph) found by $A_1,A_2$.
We consider a difference $\Delta y := y_2 - y_1$ between their returned objective values (alternatively, $\Delta\nu:={\nu}_2 - {\nu}_1$ is a difference in the number of searched solutions between $A_1$ and $A_2$; $\Delta\mu=\mu_2 - \mu_1$ are the difference in local minima). The difference~$y_2 - y_1$ is negative, when the quality of solution $y_2$ found by the VNS ($A_2$)  is better (i.e., smaller) than its alternative, multi-start VLSN search $A_1$. For some of $I = 100$ randomly generated problem instances, the~$y_2 - y_1$ differences are randomly distributed. Recall that the objective values $y_1,y_2$ are sums of randomly generated cost coefficients. So, by the law of large numbers, for large cardinality $n$, the difference $y_2 - y_1$ is approximately normally distributed as a difference of approximately normal $y_i, i=1,2$. 

The results of the statistical comparison of the quality of best solutions found by $A_1,A_2$ are summarized in Table~\ref{exp-3AP}. The average difference $y_2-y_1$ is negative, so on average the new VNS outperforms in terms of solution quality. However, for the MAP with small cardinality $N$ (e.g., $N=4,5$ for $D=4$ and when $N=5,D=5$), this out-performance does not seem to be statistically significant. Our results suggest that for any dimensionality $D$, there is some cardinality $N_0 = N_0(D)$ such that the new VNS $A_2$ likely finds the solution with quality $y_2$ that is statistically significantly better than the solution quality $y_1$ of the VLSN search $A_1$.

\begin{table}[]
\centering
\caption{Statistical Comparison of Solution Quality for 
$D=4,5$}
\label{exp-3AP}
\begin{tabular}{|l|l|l|l|l|}
\hline
  \multirow{2}{*}{\textbf{D}} & \multirow{2}{*}{\textbf{N}} & \textbf{Mean} 
 & \multirow{2}{*}{\textbf{Std. $D^y_{2,1}$}} & \multirow{2}{*}{\textbf{$2\sigma $ C.I. $y_{2} - y_{1}$}} \\
  & & $y_{2}-y_{1}$ &   & \\
 \hline
%
%
\hline
4  & 5 & -60192.61 & 64246.29  & (-188685.18, 68299.96) \\
4  & 6 & -81446.61 & 59449.21  & (-200345.03, 37451.81) \\
4  & 7 & -102845.4 & 45350.48  & (-193546.35, -12144.45) \\
4  & 8 & -112645.51 & 44745.47  & (-202136.45, -23154.57) \\
4  & 9 & -128547.98 & 36660.48  & (-201868.94, -55227.02) \\
4  & 10  & -136718.16& 35135.01 & (-206988.18, -66448.14)\\
\hline
5  & 5 & -59066.17  & 36328.95  & (-131724.06, 13591.72)\\

\hline
\hline
%
\end{tabular}
\end{table}

As mentioned above, we expect VNS to outperform VLSN search because VNS searches through more solutions $\nu_2$ than $\nu_1$ of VLSN search. Hence, we examine the trade-off $\Delta y / \Delta\nu$ between an improvement in the solution quality ($\Delta y$) and an increase in the number of solutions ($\Delta\nu$), as defined below
$$\textrm{tradeoff} =\frac{\Delta y}{\Delta \nu} = \frac{y_2 - y_1}{ \nu_2 -\nu_1},$$
where $\nu$ - number of solutions. Similarly, we also consider a trade-off $\Delta y/\Delta\mu$ between improvement in the quality ($\Delta y$) and change in the number of sinks ($\Delta\mu$). 

The results are displayed in Figures~\ref{exp-3AP-tradeoff-sols} and~\ref{exp-3AP-tradeoff-mins}. Because the corresponding differences $\Delta\nu$ and $\Delta\mu$ in solutions and minima are so large between $A_1$ and $A_2$, the mean trade-offs in solution quality for both total solutions searched and local minima quickly becomes smaller in absolute value as $N$ grows.
\begin{table}[]
\centering
\caption{Statistical Comparison of Trade-off $\frac{\Delta y}{\Delta\mu}$ between Solution Quality $y$ and Number of Found Solutions $\nu$ for 
$D=4,5$}
\label{exp-3AP-tradeoff-sols}
\begin{tabular}{|l|l|l|l|l|}
\hline
  \multirow{2}{*}{\textbf{D}} & \multirow{2}{*}{\textbf{N}} & \textbf{Mean} 
 & \multirow{2}{*}{\textbf{Std. $D^{\Delta y/\Delta\nu}_{1,2}$}} & \multirow{2}{*}{\textbf{$2\sigma $ C.I. $\frac{y_{1}-y_{2}}{{\nu}_{1}-{\nu}_{2}}$}} \\
  & & $\frac{y_{1}-y_{2}}{{\nu}_{1}-{\nu}_{2}}$ &   &  \\
 \hline
%
\hline
4 & 5 & -143.55 & 576.93 & (-1297.41, 1010.31) \\
4 & 6 & -18.10 & 41.25 & (-100.60, 64.39) \\
4 & 7 & -3.08 & 3.22 & (-9.52, 3.36) \\
4 & 8 & -0.74 & 0.94 & (-2.61, 1.13) \\
4 & 9 & -0.30 & 0.67 & (-1.63, 1.04) \\
4 & 10 & -0.10 & 0.12 & (-0.35, 0.15) \\
\hline
5 & 5 & -0.36 & 2.10 & (-4.56, 3.84) \\

\hline
\hline
%
\end{tabular}
\end{table}

\begin{table}[]
\centering
\caption{Statistical Comparison of Trade-off $\frac{\Delta y}{\Delta\mu}$ between Solution Quality $y$ and Number of Found Minima (Sinks) $\mu$ for 
$D=4,5$}
\label{exp-3AP-tradeoff-mins}
\begin{tabular}{|l|l|l|l|l|}
\hline
  \multirow{2}{*}{\textbf{D}} & \multirow{2}{*}{\textbf{N}} & \textbf{Mean} 
 & \multirow{2}{*}{\textbf{Std. $D^{\Delta y/\Delta\mu}_{1,2}$}} & \multirow{2}{*}{\textbf{$2\sigma $ C.I. $\frac{y_{1}-y_{2}}{{\mu}_{1}-{\mu}_{2}}$}} \\
  & & $\frac{y_{1}-y_{2}}{{\mu}_{1}-{\mu}_{2}}$ &   &  \\
 \hline
%
%
\hline
4 & 5 & -853.22 & 2164.10 & (-5181.42, 3474.97) \\
4 & 6 & -97.83 & 155.81 & (-409.45, 213.78) \\
4 & 7 & -13.79 & 12.08 & (-37.96, 10.38) \\
4 & 8 & -2.73 & 3.11 & (-8.96, 3.50) \\
4 & 9 & -0.98 & 2.13 & (-5.24, 3.28) \\
4 & 10 & -0.32 & 0.40 & (-1.13, 0.48) \\
\hline
5 & 5 & -18.15 & 14.12 & (-46.39, 10.09) \\

\hline
\hline
%
\end{tabular}
\end{table}

We finish comparison between the new VNS will all combinations and the original VLSN search by examining three basic features of their landscape graphs, namely the numbers of nodes (i.e., solutions), edges (i.e., neighborhood moves), and sinks (i.e., local minima). Again, we run single-start versions of both algorithms on the same randomly generated set of MAP instances, with $I=100$, and use the same starting solutions for both algorithms. We again study the MAPs with $D=4, N = 5,\ldots,10$ and $D=5,N=5$. For each MAP configuration and each algorithm (i.e., VNS, VLSN), we compute the average, minimum, and maximum numbers of nodes, edges, and sinks.

The results are displayed in Tables~\ref{tab:my-nodes_VNS_VLSN},~\ref{tab:my-edges_VNS_VLSN}, and~\ref{tab:my-sinks_VNS_VLSN}. It is easy to see that VNS produces much large landscape graphs, and the size of these graphs grows much faster for VNS than for VLSN search. For example, for $D=4,N=5$, the VNS has 15 times larger average number of nodes, 10 times larger minimum number of nodes, and 7.96 times larger maximum number of nodes, compared with VLSN search. When the MAP dimensionality grows so that $D=5,N=5$, VNS has 916.75 times larger average number of nodes, 461.94 times larger minimum number of nodes, and 457.88 times larger number of maximum nodes than VLSN search has. When the MAP cardinality increases so that $D=4,N=6$, VNS has 60.2 larger average number of nodes than VLSN search has. And when $N$ increases much 
more so that $D=4,N=10$, VNS has 3,481.97 times larger average number of nodes than VLSN search has. Tables~\ref{tab:my-edges_VNS_VLSN} and~\ref{tab:my-sinks_VNS_VLSN} shows similar patterns in relations between VNS and VLSN search for edges, and for sinks, respectively.

\begin{table}[]
\caption{The minimum, maximum, and average number of nodes in the landscape graphs generated by the VNS with all combinations of permuted dimensions and the original VLSN, where both algorithms are initialized from the same one initial starting solution. }
\label{tab:my-nodes_VNS_VLSN}
\begin{tabular}{|l|l|lll|lll|}
\hline
   \multirow{2}{*}{$D$} & \multirow{2}{*}{$N$}
   & \multicolumn{3}{l|}{VNS with all combinations}
  & \multicolumn{3}{l|}{VLSN}                                                                                                                                                      \\ \cline{3-8}
  & & min nodes & max nodes  & avg nodes  & min nodes & max nodes & avg nodes \\ \hline
4 & 5    & 73 & 3138  & 1513.61 & 7       & 394       & 99.73      \\ 
4 & 6    & 405
& 20089     
& 8996.46
& 22        
& 370       
& 149.44   
\\ 
4 & 7    & 4850      
& 152765    
& 51305.72   
& 26        
& 685       
& 217.72    
\\ 
4 & 8    & 24170     
& 878112    
& 251745.79  
& 54        
& 857       
& 362.24    
\\ 
4 & 9    & 23011     
& 4060123   
& 920079.65  
& 96        
& 1671      
& 527.46    
\\ 
4 & 10   & 98029    
& 7423584  
& 2427666.63
& 104       
& 1806      
& 697.21  
\\ \hline
5 & 5    & 7391      
& 697530    
& 471210.28  
& 16        
& 1523      
& 514.47    
\\ \hline \hline
\end{tabular}
\end{table}


\begin{table}[] 
\caption{The minimum, maximum, and average number of edges in the landscape graphs generated by the VNS with all combinations of permuted dimensions and the original VLSN, where both algorithms are initialized from the same one initial starting solution. }
\label{tab:my-edges_VNS_VLSN}
\begin{tabular}{|l|l|lll|lll|}
\hline
   \multirow{2}{*}{$D$} & \multirow{2}{*}{$N$}
   & \multicolumn{3}{l|}{VNS with all combinations}
  & \multicolumn{3}{l|}{VLSN}                                                                                                                                                      \\ \cline{3-8}
  & 
& min edges 
& max edges 
& avg edges   
& min edges 
& max edges 
& avg edges 
\\ \hline
4 & 5   
& \multicolumn{1}{l|}{511}       
& \multicolumn{1}{l|}{21966}     
& 10595.27    
& \multicolumn{1}{l|}{28}        
& \multicolumn{1}{l|}{1576}      
& 398.92    
\\ 
4 & 6    
& \multicolumn{1}{l|}{2835}      
& \multicolumn{1}{l|}{140623}    
& 62975.22    
& \multicolumn{1}{l|}{88}        
& \multicolumn{1}{l|}{1480}      
& 597.76    
\\ 
4 & 7    
& \multicolumn{1}{l|}{33950}     
& \multicolumn{1}{l|}{1069355}   
& 359140.04   
& \multicolumn{1}{l|}{104}       
& \multicolumn{1}{l|}{2740}      
& 870.88    
\\ 
4 & 8    
& \multicolumn{1}{l|}{169190}    
& \multicolumn{1}{l|}{6146784}   
& 1762220.53  
& \multicolumn{1}{l|}{216}       
& \multicolumn{1}{l|}{3428}      
& 1448.96   
\\ 
4& 9    
& \multicolumn{1}{l|}{161077}    
& \multicolumn{1}{l|}{28420861}  
& 6440557.55  
& \multicolumn{1}{l|}{384}       
& \multicolumn{1}{l|}{6684}      
& 2109.84   
\\ 
4 & 10   
& \multicolumn{1}{l|}{686203}    
& \multicolumn{1}{l|}{51965088}  
& 16993666.41
& \multicolumn{1}{l|}{416}       
& \multicolumn{1}{l|}{7224}      
& 2788.84    
\\ \hline
5 & 5    
& \multicolumn{1}{l|}{110865}    
& \multicolumn{1}{l|}{10462950}  
& 7068154.2   
& \multicolumn{1}{l|}{80}        
& \multicolumn{1}{l|}{7615}      
& 2572.35   
\\ \hline \hline
\end{tabular}
\end{table}


\begin{table}[]
\caption{The minimum, maximum, and average number of sinks (i.e., local minima) in the landscape graphs generated by the VNS with all combinations of permuted dimensions and the original VLSN, where both algorithms are initialized from the same one initial starting solution. }
\label{tab:my-sinks_VNS_VLSN}
\begin{tabular}{|l|l|lll|lll|}
\hline
   \multirow{2}{*}{$D$} & \multirow{2}{*}{$N$}
   & \multicolumn{3}{l|}{VNS with all combinations}
  & \multicolumn{3}{l|}{VLSN}                                                                                                                                                      \\ \cline{3-8}
 &  & \multicolumn{1}{l}{min sinks} & \multicolumn{1}{l}{max sinks} & avg sinks & \multicolumn{1}{l}{min sinks} & \multicolumn{1}{l}{max sinks} & avg sinks \\ \hline
4 & 5    & \multicolumn{1}{l}{21}        & \multicolumn{1}{l}{194}       & 137.32    & \multicolumn{1}{l}{4}         & \multicolumn{1}{l}{100}       & 32.29     \\ 
4 & 6    & \multicolumn{1}{l}{114}       & \multicolumn{1}{l}{1913}      & 1200.0    & \multicolumn{1}{l}{6}         & \multicolumn{1}{l}{122}       & 50.0      \\ 
4 & 7    & \multicolumn{1}{l}{1452}      & \multicolumn{1}{l}{21281}     & 9851.09   & \multicolumn{1}{l}{12}        & \multicolumn{1}{l}{225}       & 71.72     \\ 
4 & 8    & \multicolumn{1}{l}{7485}      & \multicolumn{1}{l}{167455}    & 61852.07  & \multicolumn{1}{l}{20}        & \multicolumn{1}{l}{269}       & 117.05    \\ 
4 & 9    & \multicolumn{1}{l}{7233}      & \multicolumn{1}{l}{1053356}   & 262394.96 & \multicolumn{1}{l}{29}        & \multicolumn{1}{l}{504}       & 166.82    \\ 
4 & 10   & \multicolumn{1}{l}{30232}     & \multicolumn{1}{l}{2223671}   & 733653.96  & \multicolumn{1}{l}{35}        & \multicolumn{1}{l}{567}       & 216.28    \\ \hline
5 & 5    & \multicolumn{1}{l}{1366}      & \multicolumn{1}{l}{3918}      & 3530.96   & \multicolumn{1}{l}{7}         & \multicolumn{1}{l}{514}       & 181.82    \\ \hline \hline
\end{tabular}
\end{table}

\subsection{Comparison of the landscape graphs for two alternative multi-start strategies for the original VLSN search }
\label{sec:JOGO2020}
We compare the differences in the MAP landscapes generated by two alternative exploration (i.e., multi-start) schemes for the VLSN search. Following the multi-start strategies descriptions in~\cite{kammerdiner2021multidimensional}, we choose to compare the landscapes for a random multi-start (\textbf{VLSNMS}) and a deterministic, grid-based multi-start (\textbf{Grid-VLSN}). Specifically, we run both  multi-start VLSN search implementations with the same number of multi-starts $\mu$, the same initial solutions, on  the same $I=100$ randomly generated instances for each MAP configuration with fixed $N,D$. We study the MAPs with $D=3, N = 10,20,30,40,50,100$ and $D=4,N=4,10,20,30$. The numbers of starting solutions differ based on specific $D,N$, but it is the same $\mu=N^{D-1}$ for both random and grid-based multi-start schemes. For example, the number of starting solutions $\mu=100$ for $D=3,N=10$. 

We show the estimated means (\textbf{avg}) and standard deviations (\textbf{std}) for the numbers of sources (i.e., starting solutions), nodes (i.e., all searched solutions), and sinks (i.e., found local minima). The respective results for sources, nodes, and sinks are presented in Tables~\ref{exp-3_sources},~\ref{exp-3_nodes}, and~\ref{exp-3_sinks}.

As a multi-start search algorithm goes through parts of a set of feasible solutions of a problem instance, the search produces an out-forest graph \cite{KAMMERDINER201478}, where its directed trees has roots or \emph{sources} at constructed starting solutions of the search, leaves or \emph{sinks} at local minimum solutions, and intermediate nodes at the other feasible solutions.  Together Tables~\ref{exp-3_sources},~\ref{exp-3_nodes}, and~\ref{exp-3_sinks} provide information about the graph structure (e.g., sources, nodes, and sinks) of subset of feasible solutions for two types of VLSN search, one using random multi-starts and another utilizing a grid-based multi-starts. As seen in Table~\ref{exp-3_sinks}, the search graphs grow fast with problem parameters and produce many local optima. 
Furthermore, the comparison of these two alternative ways of exploration of the solution set shows that for larger problems, grid-based multi-start tends to find better quality solutions (based on lower average gaps) and to comb through a smaller number of nodes while the number of initiated re-starts is kept the same for both. This suggests that grid-based multi-starts is better able to explore the various parts of the solution set.

\begin{table}[]
\centering
\caption{
Averages and standard deviations for the number of sources (starting solutions) in the VLSN landscape graphs for two alternative multi-start procedures: 
random multi-starts (\textbf{VLSNMS}), and 
grid-based multi-starts (\textbf{Grid-VLSN}). 
We use 100 randomly generated MAP instances  with dimensionality $D$, and cardinality $N$.}
\label{exp-3_sources} 
\begin{tabular}{r|r|rr|rr}
\hline
 \multirow{2}{*}{\textbf{$D$}}   & \multirow{2}{*}{\textbf{$N$}} &  \multicolumn{2}{c|}{\textbf{VLSNMS}} & \multicolumn{2}{c}{\textbf{Grid-VLSN}} \\
    &  &  \textbf{avg sources} & \textbf{std sources} 
    & \textbf{avg sources} & \textbf{std sources} 
    \\ \hline
 3 & 3   & 5.72 & 1.19 
 & 6.04 & 0.95 
 \\
 3 & 10   & 100 & 0.00 
 & 100 & 0.00 
 \\
 3 & 20   & 400 & 0.00 
 & 400 & 0.00 
 \\
 3 & 30   & 900 & 0.00 
 & 900 & 0.00  
 \\
 3 & 40   & 1600 & 0.00 
 & 1600 & 0.00 
 \\
 3 & 50   & 2500 & 0.00 
 & 2500 & 0.00 
 \\
 3 & 100   & 10000 & 0.00 
 & 10000 & 0.00 
 \\ \hline
 4 & 4   & 61.47 & 1.45 
 & 57.08 & 2.40 
 \\
 4 & 10   & 1000 & 0.00 
 & 1000 & 0.00 
 \\
 4 & 20   & 8000 & 0.00 
 & 8000 & 0.00 
 \\
 4 & 30   & 27000 & 0.00 
 & 27000 & 0.00 
 \\
 \hline
 \hline
\end{tabular}
\end{table}


\begin{table}[]
\centering
\caption{
Averages and standard deviations for the number of nodes in the VLSN landscape graphs for two alternative multi-start procedures: 
random multi-starts (\textbf{VLSNMS}), and 
grid-based multi-starts (\textbf{Grid-VLSN}). 
We use 100 randomly generated MAP instances  with dimensionality $D$, and cardinality $N$.}
\label{exp-3_nodes}
\begin{tabular}{r|r|rr|rr}
\hline
 \multirow{2}{*}{\textbf{$D$}}   & \multirow{2}{*}{\textbf{$N$}} &  \multicolumn{2}{c|}{\textbf{VLSNMS}} & \multicolumn{2}{c}{\textbf{Grid-VLSN}} \\
    &  &  \textbf{avg nodes} & \textbf{std nodes} 
    & \textbf{avg nodes} & \textbf{std nodes} 
    \\ \hline
 3 & 3   
 & 15.55 & 1.85 
 &	14.58 & 1.89 
 \\
 3 & 10   
 & 4550.26 & 309.78 
 &	574.13 & 54.45 
 \\
 3 & 20   
 & 66574.26 & 4843.44 
 &	3661.87 & 338.61 
 \\
 3 & 30   
 & 355903.73 & 17745.00 
 &	12742.86 & 1009.61 
 \\
 3 & 40   
 & 1319062.92 & 63481.93 
 &	34630.15 & 2465.55 
 \\
 3 & 50  
 & 269942.47 & 236.50 
 &	81231.99 & 5624.46 
 \\
 3 & 100   
 & 530012.67 & 729.49 
 &	401795.20 & 66.62 
 \\ \hline
 4 & 4   
 & 650.85 & 28.65 
 &	337.42 & 27.00 
 \\
 4 & 10   
 & 213312.35 & 57049.72 
 &	69957.31 & 4623.35 
 \\
 4 & 20   
 & 1701591.17 & 550.25 
 &	1604876.12 & 286.66 
 \\
 4 & 30   
 & 3145957.31 & 1369.54 
 &	2733592.89 & 266.00 
 \\
\hline \hline
\end{tabular}
\end{table}


\begin{table}[]
\centering
\caption{
Averages and standard deviations for the number of sinks (local minima) in the VLSN landscape graphs for two alternative multi-start procedures: 
random multi-starts (\textbf{VLSNMS}), and 
grid-based multi-starts (\textbf{Grid-VLSN}). 
We use 100 randomly generated MAP instances  with dimensionality $D$, and cardinality $N$.}
\label{exp-3_sinks}
\begin{tabular}{r|r|rr|rr}
\hline
 \multirow{2}{*}{\textbf{$D$}}   & \multirow{2}{*}{\textbf{$N$}} &  \multicolumn{2}{c|}{\textbf{VLSNMS}} & \multicolumn{2}{c}{\textbf{Grid-VLSN}} \\
    &  &  \textbf{avg sinks} & \textbf{std sinks} 
    & \textbf{avg sinks} & \textbf{std sinks} 
    \\ \hline
 3 & 3   
 &   1.58 & 0.64 
 & 1.58 & 0.64
 \\
 3 & 10   
 &   1167.98 & 59.29 
 &139.86 & 13.10
 \\
 3 & 20   
 &   17539.51 & 1141.31 
 &864.34 & 86.46
 \\
 3 & 30   
 &   88339.38 & 4132.30 
 &2944.43 & 246.26
 \\
 3 & 40   
 &   315121.06 & 14303.08 
 &7901.09 & 567.36
 \\
 3 & 50  
 &   54276.75 & 249.87 
 &18326.57 & 1261.10
 \\
 3 & 100   
 &   58713.34 & 267.18 
 & 84038.32 & 231.55
 \\ \hline
 4 & 4   
 &   49.85 & 9.60 
 &43.71 & 7.60
 \\
 4 & 10   
 &   63237.82 & 17808.02 
 &21524.84 & 1384.00
 \\
 4 & 20   
 & 428071.05 & 1115.62 
 
 & 436636.17 & 1146.15
 \\
 4 & 30   
 &   637265.08 & 1022.70 
 & 682608.79 & 1164.37
 \\
\hline \hline
\end{tabular}
\end{table}

\section{Path length analysis and the fitness-distance correlations}\label{sec:correl}
In this section, we numerically analyze the fitness-distance correlations. Because of the relation with problem instance difficulty~\cite{jones1995fitness}, a major theme in the work on combinatorial landscapes is analysis of the fitness-distance correlation. Specifically, we study the relation between the fitness and distance for the local minima of the MAP instance (or the sinks of the MAP landscape of a given problem instance). For any feasible solution, including a local minimum, the fitness refers to the objective value of this solution. While the distance is defined here as the length of a shortest path to this solution from some given starting solution. 

We perform computational experiments as follows. As in Section~\ref{sec:compareVSN2VLSN}, we consider a sequence of the seven MAPs with parameter pairs 
$(D,N)$ (sometimes denoted $D\times N$), where either $D=4$ and $N = 5,6,7,8,9$, or $D=5, N=5$. 
For each MAP with fixed $D$ and $N$, we generate $I=100$ instances with randomly chosen cost coefficients as described in detail in~\cite{kammerdiner2021multidimensional}.
We solve each MAP instance using either the VNS or the VLSN search. Again, just as in Section~\ref{sec:compareVSN2VLSN}, (i) both algorithms are only started once, (ii) this starting solution is the same for both algorithms, and (iii) the same $I$ problem instances are used for running both algorithms.  
Before calculating and examining the fitness-distance correlations, we first compute and analyze path lengths.

The path lengths analysis involves first finding a/the shortest path from the (single) given starting solution to each local minimum (or a sink in the landscape graph) for every instance, and then calculating the lengths of these shortest paths. Hence, for each MAP $\mathcal{P}_j$ with fixed $D_j,N_j, j=1,\ldots,7$ and each instance $P_i, i=1,\ldots, I$ of $\mathcal{P}$, we have a total of $m_{ij}$ local minima (sinks), and each minimum having the respective path lengths $l(m_{ij})$. 

Figures~\ref{fig:MAP_pathsVNS} and~\ref{fig:MAP_pathsVLSN} display box plots of the path length distribution for the VNS and the VLSN search, respectively. Each figure shows seven box plots for seven different MAP configurations $D\times N$. When comparing the box plots on the two figures, several observations are apparent. First, the new VNS has a significantly steeper increase \emph{with respect to cardinality} $N$ in the path lengths to a local minimum than the original VLSN search. Second, the path length of the VLSN search does not appear significantly different. Indeed, while the average lengths of the VLSN search do seem to grow, but rather moderately.  Third, the variation in path lengths also increases \emph{with respect to cardinality} $N$ for the VNS, but not necessary for the VLSN search. This suggests that for MAP instances with greater $N$, the VNS is able to keep searching much longer than the original VLSN search, resulting in deeper exploitation.

\begin{figure}
  \centering
  \includegraphics[angle=0,width=0.9\textwidth]{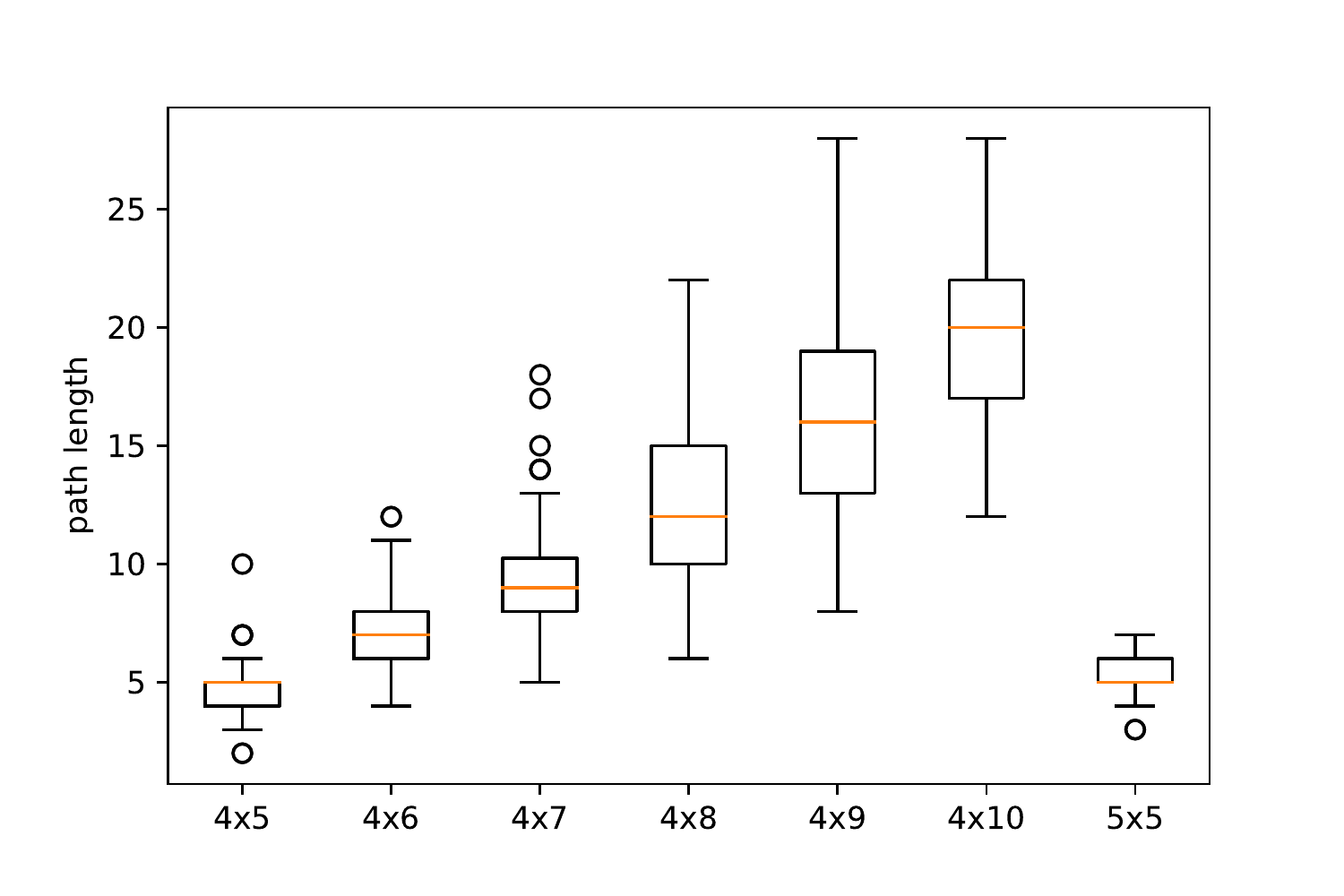}\\
  \caption{Box plot showing length of path to the optimal solution, VNS}
  \label{fig:MAP_pathsVNS}
\end{figure}

\begin{figure}
  \centering
  \includegraphics[angle=0,width=0.9\textwidth]{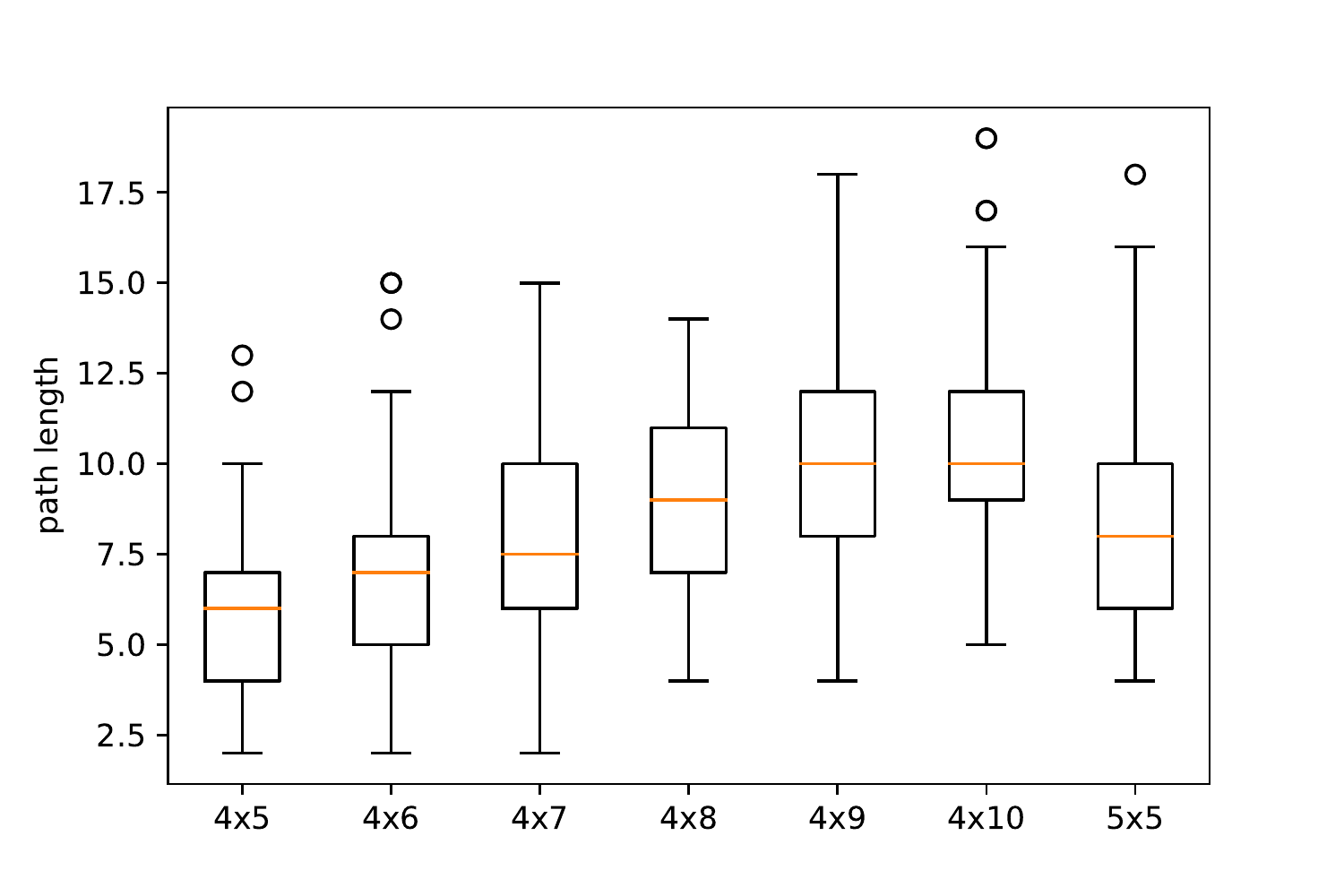}\\
  \caption{Box plot showing length of path to the optimal solution, VLSN}
  \label{fig:MAP_pathsVLSN}
\end{figure}

Next, we calculate the fitness-distance correlations among the local minima similarly to the way we computed the path length. Specifically, we fix the MAP with given parameter pair $D$ and $N$. For each MAP, we consider the same $I=100$ randomly generated instances. Each instance was solved using the VLSN search and the VNS to find all local minima (sinks of the landscape). For each local minimum $m$, we use the previously computed path lengths $l(m)$ as the \emph{distance}. For each local minimum $m$, we also find its objective value $f(m)$, which gives the \emph{fitness}. Finally, for each instance, we compute the correlation $\rho = corr(l(m),f(m))$ between the distance and fitness values of all local minima of this problem instance.

The results are shown in Figures~\ref{fig:MAP_corrVNS} and~\ref{fig:MAP_corrVLSN}, which display box plots of the fitness-distance correlations for the VNS and the VLSN search, respectively. Each respective figure shows seven or six box plots for 
different MAP configurations $D\times N$. When comparing the box plots on the two figures, several observations are apparent. First, the new VNS has a significant steep decrease \emph{with respect to cardinality} $N$ in the fitness-distance correlations to a local minimum.  Second, the correlation of the original VLSN search does not change significantly with respect to cardinality $N$. %
Third, the variation in correlations significantly decreases \emph{with respect to cardinality} $N$ for the VSN, but this decrease is not significant at all for the VLSN search. 
Following reasoning analogous to~\cite{jones1995fitness}, if we are minimizing, we would like to see the fitness decrease as the distance from the starting solution to local minimum increases. Hence, the ideal fitness function will have $\rho = -1.0$ 
This suggests that for MAP instances with smaller $N$, the VNS produces a ruggeder landscape than the original VLSN search. In other words, there may be a trade-off in using the VNS for the MAPs with smaller cardinality $N$.

\begin{figure}
  \centering
  \includegraphics[angle=0,width=0.9\textwidth]{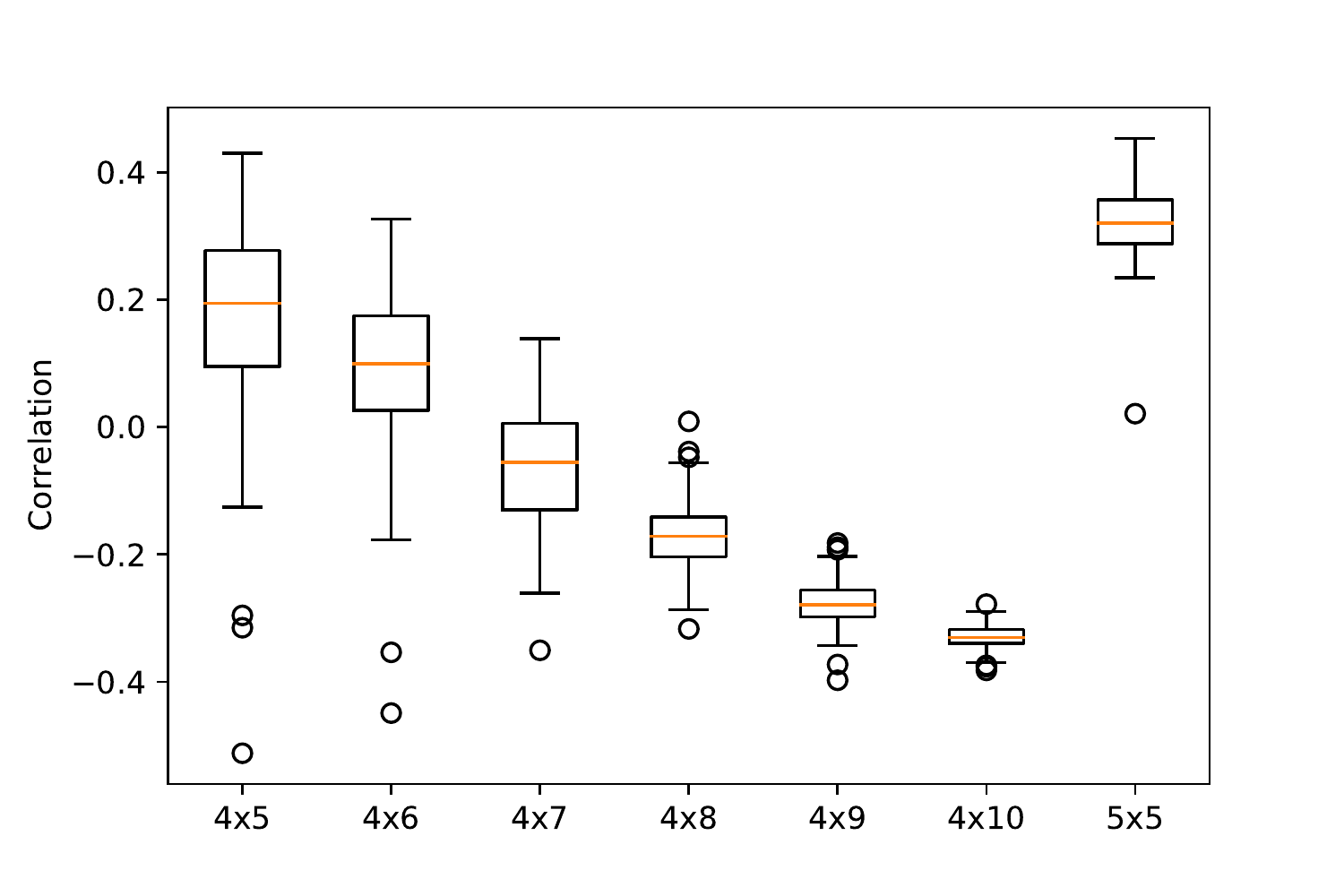}\\
  \caption{Correlations between fitness and distance for the local minima, VNS}
  \label{fig:MAP_corrVNS}
\end{figure}

\begin{figure}
  \centering
  \includegraphics[angle=0,width=0.9\textwidth]{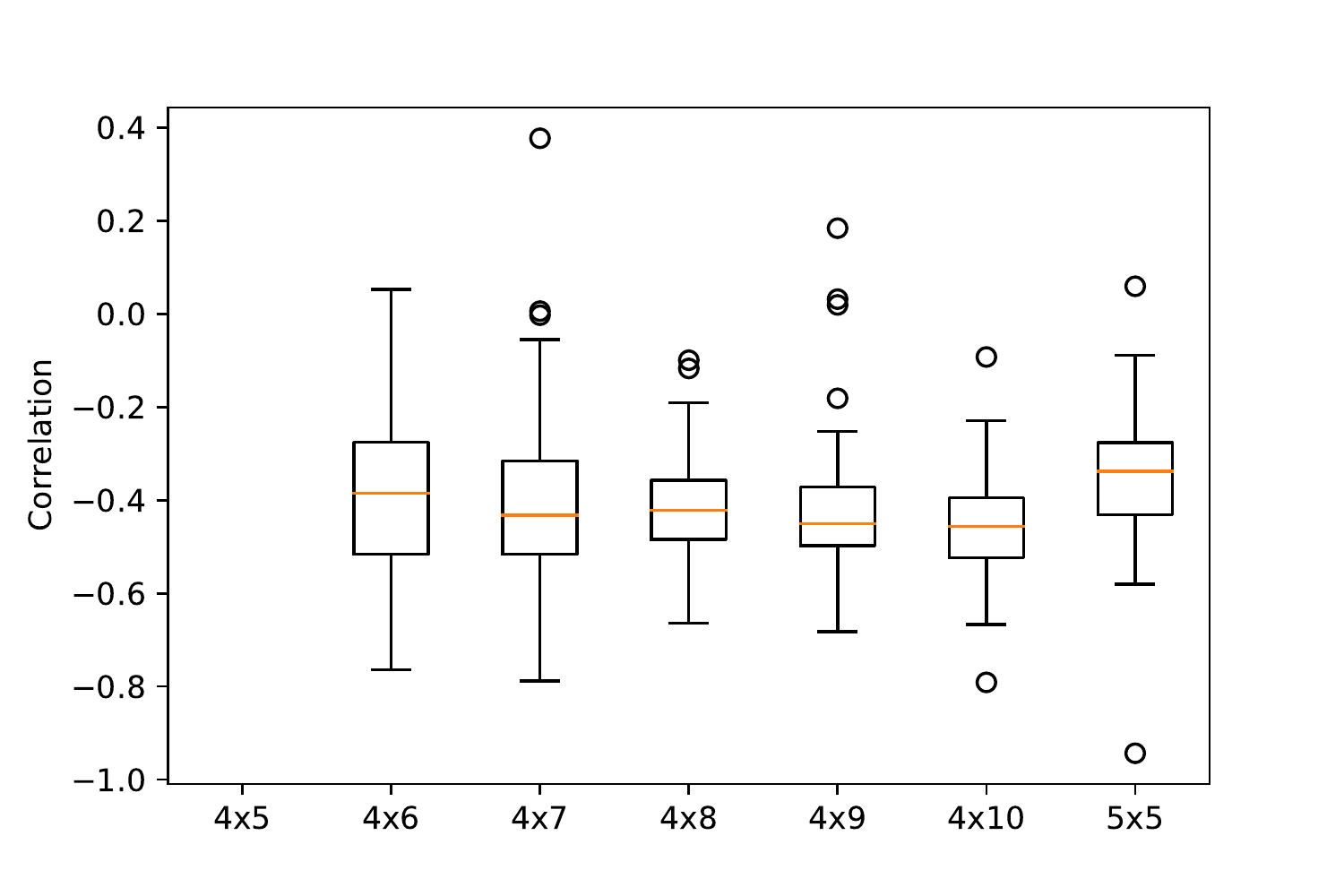}\\
  \caption{Correlations between fitness and distance for the local minima, VLSN}
  \label{fig:MAP_corrVLSN}
\end{figure}

\section{Conclusion}
\label{sec:end}
We conduct numerical and theoretical analyses of the MAP landscapes generated by various implementations of VLSN search. Furthermore, we propose the VNS algorithm, a novel version of the VLSN search that utilizes variable neighborhoods of order $K$, $K<D$. This new version enables further exploitation and results in a larger number of local minima than the original VLSN search. In terms of its solution quality, VNS statistically significantly outperforms the original VLSN for all MAP configurations except the MAPs with very small cardinality $N$. 

Moreover, we show by induction on $D$ that an undirected graph of the MAP landscape for $N=2$ can be represented by a hypercube. Further research should be aimed at generalizing the result for $N>2$ and determining the digraph structure of the MAP landscapes produced by VLSN search.

We conduct computational experiments to compare the solution quality, and the landscape graphs' structures between alternative versions of the VLSN search for the MAP. We are able to separately examine differences in exploitation versus differences in exploration. In particular, we study the benefits and trade-offs of greater exploitation by comparing and contrasting VNS with VLSN search. To understand impact of the exploration phase on the MAP landscapes, we consider original VLSN search and compare two different multi-start strategies, namely a random and a deterministic, grid-based.
Additionally, we analyze the path lengths and the fitness-distance correlations for VNS and VLSN search.
The findings in this paper can be used to design improved VLSN-based solution algorithms for the MAP.

\section*{Acknowledgment}
A. Kammerdiner was supported by the AFRL (National Research Council Fellowship). 

\section*{Data availability statement}

Because of the size of the simulated data, it is neither practical nor necessary to share all data that we generated or analysed during this study. 
The minimal dataset that would be necessary to interpret, and build upon the findings reported in the article consists of numerous tables and figures, all of which are included in this published article. For purpose of the replication, the authors will make the code used to simulate the data available via github.

\bibliographystyle{abbrv}
\bibliography{ref_wcgo_jogo_si} 
\end{document}